\documentclass{article}




\usepackage{amssymb}
\usepackage{amsthm}

\usepackage[latin1]{inputenc}  

\usepackage{amsmath}
\usepackage{amsthm}

\newtheorem{thm}{Theorem}
\newtheorem{lem}[thm]{Lemma}

\newcommand{\Pro}[1]{\mbox{{\rm Pr}}\left(#1\right)}

\title{Quasi-Random Rumor Spreading: Reducing Randomness Can Be Costly}

\author{Benjamin Doerr \and  Mahmoud Fouz}
\begin{document}

\maketitle





\begin{abstract}
We give a time-randomness tradeoff for the quasi-random rumor spreading protocol proposed by Doerr, Friedrich and Sauerwald [SODA 2008] on complete graphs. In this protocol, the goal is to spread a piece of information originating from one vertex throughout the network. Each vertex is assumed to have a (cyclic) list of its neighbors. Once a vertex is informed by one of its neighbors, it chooses a position in its list uniformly at random and then informs its neighbors starting from that position and proceeding in order of the list. Angelopoulos, Doerr, Huber and Panagiotou [Electron.~J.~Combin.~2009] showed that after $(1+o(1))(\log_2 n + \ln n)$ rounds, the rumor will have been broadcasted to all nodes with probability $1 - o(1)$. 

We study the broadcast time when the amount of randomness available at each node is reduced in natural way. In particular, we prove that if each node can only make its initial random selection from every $\ell$-th node on its list, then there exists lists such that $(1-\varepsilon) (\log_2 n + \ln n - \log_2 \ell - \ln \ell)+\ell-1$ steps are needed to inform every vertex with probability at least $1-O\bigl(\exp\bigl(-\frac{n^\varepsilon}{2\ln n}\bigr)\bigr)$. This shows that a further reduction of the amount of randomness used in a simple quasi-random protocol comes at a loss of efficiency.
\end{abstract}


\section{Introduction}

\subsection{Randomized Rumor Spreading}

We consider the \textit{rumor spreading problem}, i.e., the problem of dissemanting information on networks: given a graph $G$ and a node $v$ that has some piece of information, the goal is to spread this piece of information to all nodes, where in each step only adjacent nodes can communicate with each other. A simple randomized algorithm for this problem is for each informed node to select, in each iteration, one of its neighbors uniformly at random and then to send the piece of information to that neighbor. In case of the complete graph, Frieze and Grimmett \cite{FG85} showed that $(1+o(1)) (\log_2 n + \ln n)$ iterations are sufficient in order to inform every node with probability $1-o(1)$. This was tightened by Pittel \cite{Pi87} who proved that $\log_2 n + \ln n + O(1)$ iterations are sufficient for that. 

Note that each node needs $\lceil\log_2 (n-1)\rceil$ random bits in order to choose one of its neighbors uniformly at random. Since most nodes keep informing for $\Omega(\log n)$ rounds until all nodes are informed, a node will use $\Omega(\log^{2} n)$ bits on average with probability $1-o(1)$. Recently, Doerr et al.~\cite{BC} reduced the amount of randomness needed for each node to $O(\log n)$ while maintaining a logarithmic running time. In their \textit{quasi-random model}, they assume that every node has a (cyclic) list of its neighbors. This list dictates the order in which the node informs its neighbors. Once a node $v$ gets informed, it selects a position in its list uniformly at random and proceeds by informing all nodes starting from this position. In other words, after an initial random choice that requires $\lceil\log_2 (n-1)\rceil$ random bits, the node proceeds deterministically and needs no further random bits. 

In this paper, we complement this effort of reducing the amount of randomness by providing a tight time-randomness tradeoff. Whereas the reduction of random bits from $O(\log^2 n)$ to $O(\log n)$ at each vertex comes at no loss of efficiency, we show that a subsequent reduction of randomness in a more general model will incur additional rounds for particular choices of the lists. In this \textit{gate model}, we assume that every vertex makes its random choice only from a subset of special vertices equidistantly distributed on its list. Roughly speaking, we prove that if $\ell$ is the distance between two gates, then, with probability $1-o(1)$, $\ell$ additional rounds are needed to inform all vertices. 

\subsection{The Dilemma of Randomization}

Probabilistic methods have given rise to a large number of algorithms that utilize random choices to perform difficult tasks efficiently. Often, these probabilistic algorithms beat deterministic algorithms not only in terms of running time, but also in terms of complexity, or rather simplicity. On the downside, probabilistic algorithms have two major drawbacks: First, it is highly non-trivial to produce truly random bits. Second, although these algorithms perform quite well in expectation or even with high probability, there is no guarantee that they will always do so. Derandomized versions of these algorithms are therefore highly desirable. Unfortunately, it remains one of the big open questions in computer science whether it is always possible to completely derandomize polynomial time randomized algorithms without sacrificing the polynomial running time. For recent developments on this question, we refer the interested reader to a survey by Impagliazzo \cite{I07}. There are two ways to work around this problem. First, one can try to reduce the amount of randomness needed in these algorithms without any (or minor) sacrifices in terms of efficiency. Second, one can can study the relationship between the running-time and the amount of randomness used. Both approaches have been applied to several problems (see, e.g., \cite{B91,CG89,KR88,PU90,BC}). In this paper, we apply the second approach to the rumor spreading problem.  

\section{Time-Randomness Tradeoff}

We consider a generalization of the quasi-random model, where the number of available random bits at each vertex is less than $\log_2 n$.
More precisely, in the \textit{gate model} we assume that every vertex makes his random choice only from a subset of special vertices among his neighbors. These \textit{gates} are equidistantly distributed in the list of each node at distance $\ell \leq  n-1 $ from each other, starting from the first neighbor in the list. Since the number of random bits needed decreases when $\ell$ increases, we can think of $\ell$ as a randomness measure. After the initial random choice of a gate neighbor, the vertex continues to inform all the subsequent neighbors as before. Note that for $\ell = 1$, the gate model reduces to the standard quasi-random model. For clarity, we assume that $n/\ell$ is integral.
\begin{thm}
\label{lowerbound_gatemodel}
There exist lists such that the quasi-random gate model with randomness parameter $\ell \in [n]$ on the complete graph on $n$ vertices needs at least
	\begin{equation*}
	  (1-\varepsilon) (\log_2 n + \ln n - \log_2 \ell - \ln \ell)+\ell-1 
	\end{equation*}
steps to inform every vertex with probability $1-O\bigl(\exp(-\frac{n^\varepsilon}{2\ln n})\bigr)$.
\end{thm}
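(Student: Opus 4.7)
The plan is to use identical cyclic lists on all vertices so that the gate positions are perfectly aligned and the $n$ vertices split into $m = n/\ell$ blocks of size $\ell$, each headed by its unique gate. Under this symmetry, if a vertex $v$ is informed at round $t_v$ and picks the gate of block $j_v \in [m]$, then $v$ informs the $r$-th vertex of block $j$ at round $t_v + 1 + \ell \cdot d(j_v, j) + r$, where $d$ denotes cyclic distance in $[m]$. In particular, the last vertex of block $j^*$ is informed exactly $\ell - 1$ rounds after the block-head activation time $\tau_{j^*} = \min_v(t_v + 1 + \ell \cdot d(j_v, j^*))$, so the goal reduces to exhibiting some $j^*$ with $\tau_{j^*} > T - \ell + 1$ at the stated probability.

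Fix $j^*$ and condition on the informing times $\{t_v\}$; the gate choices $\{j_v\}$ are then independent and uniform on $[m]$. The event $\tau_{j^*} > T - \ell + 1$ occurs iff each $v$ chose $j_v$ outside an arc of at most $(T - t_v)/\ell$ blocks around $j^*$, which has conditional probability at least $1 - (T - t_v)/n$. Using $1 - x \geq e^{-(1+o(1))x}$ for small $x$ gives
\[
\Pro{\tau_{j^*} > T - \ell + 1} \geq \mathbb{E}\Big[\exp\Big(-(1+o(1))\sum_v (T - t_v)/n\Big)\Big].
\]
The doubling bound $I_\tau \leq \min(2^\tau, n)$ (each informed vertex does exactly one inform per round) together with $\sum_v (T - t_v) = \sum_{\tau \leq T} I_{\tau - 1}$ yields $\sum_v (T - t_v) \leq n(T - \log_2 n + O(1))$. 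Substituting $T = (1-\varepsilon)(\log_2 m + \ln m) + \ell - 1$ simplifies this to $\Pro{\tau_{j^*} > T - \ell + 1} \geq n^{-(1-\varepsilon) - o(1)}$. Summing over the $m = n/\ell$ choices of $j^*$, the expected number of uninformed block-tail vertices at round $T$ is at least $n^{\varepsilon - o(1)}/\ell$, which exceeds $n^\varepsilon/\ln n$ in the regimes where the first term of $T$ dominates; for very large $\ell$ the deterministic $\ell - 1$ piece alone already suffices.

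The final step is to turn this expectation into the high-probability bound $\Pro{\text{all informed by round } T} \leq O(\exp(-n^\varepsilon/(2\ln n)))$. Because each gate choice $j_v$ places any fixed $j^*$ in at most one of the $m$ disjoint danger arcs, the indicators $\mathbf{1}\{\tau_j > T - \ell + 1\}$ are negatively associated, and a Janson- or Suen-type inequality should then produce the required exponential tail. I expect this concentration step to be the main obstacle: the $\ell$-spillover of sequential informing correlates the events for neighbouring blocks, so controlling the higher-order dependencies tightly enough to obtain the specific exponent $n^\varepsilon/(2\ln n)$ (rather than a weaker $n^{\varepsilon/2}$ or polynomial bound) will require careful bookkeeping of how each random choice contributes to multiple nearby danger arcs.
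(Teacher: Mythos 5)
Your reduction of the finish time to $\max_j \tau_j + \ell - 1$ on aligned lists matches the paper's construction, but the core of your argument has two genuine gaps. First, the step ``condition on the informing times $\{t_v\}$; the gate choices $\{j_v\}$ are then independent and uniform'' is not valid: the informing times are functions of the gate choices (knowing which vertex has $t_v=2$ already reveals where the source aimed), so after this conditioning the $j_v$ are neither independent nor uniform, and the product bound $\prod_v\bigl(1-(T-t_v)/n\bigr)$ does not follow. The paper's way around exactly this obstacle is its two-phase simulation: phase one uses only the doubling bound for $(1-\varepsilon)\log_2(n/\ell)$ rounds, extracting no information about \emph{which} gates were hit, and at the start of phase two the random choices of all not-yet-informed vertices are \emph{brought forward}, which only accelerates the process and makes those choices genuinely i.i.d.\ uniform and independent of the phase-one history. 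You need some such decoupling device before any product formula can be written down.

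Second, the concentration step you defer to a Janson/Suen inequality is where your route diverges from what the paper makes work. The paper does not show that a \emph{single} gate survives until time $T-\ell+1$; its Lemma~\ref{freeinterval_lemma} shows that among the disjoint runs of $k$ consecutive gates at least one is entirely unchosen, by writing $\Pro{\text{all runs are hit}}$ as a chain of conditional probabilities and using negative correlation to bound it by $(1-p)^{n/(\ell k)-m}$; this yields the $\exp(-n^\varepsilon/(2\ln n))$ tail directly, with no second-moment bookkeeping, and the term $(1-\varepsilon)\ln(n/\ell)$ then comes from the $\ell\cdot k$ rounds needed to \emph{traverse} that unchosen run rather than from one gate staying unhit for that long. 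This distinction matters quantitatively for your first moment: every vertex informed before time $T-\ell$ chooses gate $j^*$ itself with probability $\ell/n$, which is the $\sum_v(\ell-1)/n$ portion of your exponent; since your growth bound $I_\tau\le\min(2^\tau,n)$ permits $\Theta(n)$ such vertices, the single-gate survival probability carries a factor as small as $e^{-\Theta(\ell)}$, and $m\cdot e^{-\Theta(\ell)}=o(1)$ once $\ell$ is of order $\ln n$ or larger. So the claimed bound $n^{-(1-\varepsilon)-o(1)}$ only follows for $\ell=o(\log n)$, and the intermediate regime $\ln n\lesssim\ell$, where the $(1-\varepsilon)\ln(n/\ell)$ term is still needed, is covered neither by your main argument nor by the deterministic $\ell-1$ fallback. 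Closing it requires either the free-interval/traversal mechanism or a substantially finer growth bound (only the leading wave in each block informs a new vertex, so $I_t-I_{t-1}=O(\min(I_{t-1},n/\ell))$), neither of which appears in the proposal.
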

Theorem \ref{lowerbound_gatemodel} gives a natural tradeoff between the amount of randomness used and the broadcast time needed. Note that such a result cannot hold for arbitrary lists. In particular, for randomly chosen lists the starting point does not matter. So even if all lists start informing from the first node on their list, the process amounts exactly to the classical quasi-random model for which Angelopoulos et al.~proved the following lower bound.
\begin{thm}[Angelopoulos et al.~\cite{ADHP09}]
 For all lists, the quasi-random protocol on the complete graph on $n$ vertices informs all vertices in 
\begin{equation*}
 (1+o(1))(\log_2 n + \ln n)
\end{equation*}
steps with probability $1-o(1)$.
\end{thm}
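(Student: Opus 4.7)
The plan is to construct specific lists under which the broadcast process decouples into two parts: (i) a quasi-random-like subprocess on $n/\ell$ ``super-vertices'' (the groups of $\ell$ consecutive positions between successive gates) and (ii) a deterministic tail of length $\ell - 1$ needed to clean up the last group to be entered. For part (ii), I fix a target group $G^* = \{g^*, r_1^*, \ldots, r_{\ell-1}^*\}$ and arrange the list $L_w$ of every $w \notin G^*$ so that $g^*, r_1^*, \ldots, r_{\ell-1}^*$ appear in precisely this consecutive order. Because $r_i^*$ sits exactly $i$ positions after $g^*$ in $L_w$, any such $w$ that ever informs $r_i^*$ does so exactly $i$ rounds after informing $g^*$, yielding the identity
\[
\tau(r_{\ell-1}^*) = \tau(g^*) + \ell - 1,
\]
where $\tau(x)$ denotes the first round in which $x$ is informed. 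The $\ell - 1$ vertices inside $G^*$ could in principle inform each other; but by placing $g^*, r_1^*, \ldots, r_{\ell-1}^*$ at far-away positions in the lists of the $G^*$-members themselves, this possibility is suppressed with probability $1 - O(\ell^2/n)$, comfortably absorbed by the final error bound. It therefore suffices to lower-bound $\tau(g^*)$.

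For part (i), I partition the vertices into $n/\ell$ groups of size $\ell$ (including $G^*$) and, for each vertex $v$, arrange $L_v$ as a concatenation of $n/\ell$ blocks of $\ell$ consecutive positions, each block containing one full group. The order of groups inside $L_v$ is described by a permutation $\sigma_v$ chosen adversarially, mirroring a worst-case list structure for the standard quasi-random protocol on $n/\ell$ vertices. Under this structure, $v$'s initial random gate-choice amounts to picking a uniformly random group, and $v$ then enters new groups deterministically every $\ell$ rounds. Consequently, the ``group-entry'' subprocess runs exactly as the standard quasi-random protocol would on the $n/\ell$ super-vertices with lists $\sigma_v$.

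To lower-bound $\tau(g^*)$, I apply a sharpened analysis to this super-protocol. Using the doubling bound $V(t) \leq 2^t$ together with the per-vertex rate of one new gate informed per $\ell$ rounds, I bound the expected number of informed gates throughout the doubling phase of length $\approx \log_2(n/\ell)$. In the subsequent coupon-collector phase of length $\approx \ln(n/\ell)$, I apply Chernoff- or Bernstein-type concentration round by round to show that at time $T := (1-\varepsilon)(\log_2(n/\ell) + \ln(n/\ell))$ at least $n^\varepsilon/\ell$ gates remain uninformed. By the symmetry of the construction (or, equivalently, by a union bound over a uniformly random choice of $g^*$ among the $n/\ell$ gates), the specific target $g^*$ is among the uninformed with probability $1 - O(\exp(-n^\varepsilon/(2\ln n)))$, and combined with the identity from part (ii) this yields the theorem.

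The main obstacle will be the concentration step: Theorem~2 of Angelopoulos et al.~only gives probability $1 - o(1)$, whereas the current theorem demands the much stronger tail bound $O(\exp(-n^\varepsilon/(2\ln n)))$. Achieving it requires replacing the crude union bounds in the coupon-collector analysis with Bernstein-type inequalities applied at each of the $\Theta(\ln n)$ rounds of that phase and taking a union bound over them---this is the source of the $\ln n$ factor in the denominator of the exponent. A secondary subtlety is the dependence between different vertices' random initial-gate choices and their informing times; this is handled by revealing the randomness sequentially (first the set of informed vertices round by round, then the fresh gate choices of newly informed vertices) and invoking standard stochastic-domination arguments for the resulting martingale of uninformed gates.
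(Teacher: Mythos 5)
You are proving the wrong statement. The theorem in question is the cited result of Angelopoulos et al.: an \emph{upper} bound on the broadcast time of the standard quasi-random protocol (i.e., $\ell=1$) that must hold \emph{for all} lists, with success probability $1-o(1)$. Your proposal instead constructs one specific adversarial family of lists for the gate model with general $\ell$, introduces a target group $G^*$ of size $\ell$, and aims at a \emph{lower} bound of the form $(1-\varepsilon)(\log_2(n/\ell)+\ln(n/\ell))+\ell-1$ with failure probability $O\bigl(\exp(-n^\varepsilon/(2\ln n))\bigr)$ --- that is Theorem~\ref{lowerbound_gatemodel} of this paper, not the statement you were asked to prove. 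The mismatch is not cosmetic: a universally quantified upper bound cannot be established by exhibiting a single bad list construction, the statement contains no parameter $\ell$ at all, and it asks only for probability $1-o(1)$, not the exponential tail your concentration step is engineered to deliver. Your own remark that ``Theorem~2 of Angelopoulos et al.\ only gives probability $1-o(1)$, whereas the current theorem demands the much stronger tail bound'' makes the confusion explicit --- the ``current theorem'' here \emph{is} that Theorem~2.

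For context, the paper gives no proof of this statement; it is quoted from \cite{ADHP09} precisely to contrast with Theorem~\ref{lowerbound_gatemodel}: since for randomly chosen lists the starting position is irrelevant, the $\ell-1$ penalty cannot hold for arbitrary lists, so the adversarial list construction is essential to the lower bound. A genuine proof of the cited statement would have to be an upper-bound argument valid for every list order --- typically an exponential-growth phase of length $(1+o(1))\log_2 n$ followed by a coupon-collector-type push phase of length $(1+o(1))\ln n$, with care taken over the dependencies introduced by the deterministic scanning of each list. Nothing in your proposal addresses that task.
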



The proof simulates a process consisting of two phases that finishes no later than the actual model. The second phase of the process can be reduced to the following problem.  Let $e_1,\dots, e_n$ be a sequence of $n$ elements. Assume that $m$ elements are already marked. In addition, we mark $i$ elements uniformly at random with replacement. The following lemma shows that there is a large interval of unmarked elements with high probability for a reasonable choice of $m$ and $i$. 

\begin{lem}
\label{freeinterval_lemma}
   Let $e_1,\dots, e_n$ be a sequence of $n$ elements out of which $m$ elements are `pre-marked' and, furthermore, $i \in \omega\bigl(\ln^2 n\bigr)$ elements are marked uniformly at random with replacement. Then, for all $\varepsilon >0$ and large $n$, the largest interval of unmarked elements has length at least 
\begin{equation*}k= \tfrac{n}{i}(1-\epsilon)\ln n\end{equation*}
 with probability at least 
$1-\exp\Bigl(-\tfrac{1}{2}\bigl(n^{\varepsilon}/k +mn^{-1+\varepsilon}\bigr)\Bigr)$.
\end{lem}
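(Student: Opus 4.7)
My plan is to partition $e_1,\dots,e_n$ into $n/k$ disjoint consecutive intervals $I_1,\dots,I_{n/k}$ of length $k$ each, and to show that with the stated probability at least one $I_j$ contains no mark of either kind. Any such $I_j$ is immediately an unmarked interval of length $k$, which is what the lemma asks for. First I would observe that each of the $m$ pre-marks lies in exactly one $I_j$, so at most $m$ intervals contain a pre-mark and at least $B := n/k - m$ intervals are \emph{clean} (free of pre-marks). For a fixed interval $I$ of length $k$, the probability that none of the $i$ i.i.d.\ uniform random marks falls in $I$ equals $(1-k/n)^i$. The hypothesis $i \in \omega(\ln^2 n)$ forces $k = o(n/\ln n)$, so the estimate $1-x \geq e^{-x/(1-x)}$ with $x=k/n$ combined with $ik = (1-\varepsilon)n\ln n$ yields
\begin{equation*}
  (1-k/n)^i \;\geq\; e^{-(1-\varepsilon)(1+o(1))\ln n} \;\geq\; n^{\varepsilon-1}
\end{equation*}
for $n$ large, after absorbing the $(1+o(1))$ loss; this absorption is what ultimately produces the $\tfrac12$ in the final exponent.

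The crucial step is to lift this per-interval bound to a simultaneous statement over all clean intervals. The events ``$I_j$ is hit by some random mark'' for disjoint $I_j$ are negatively associated: throwing $i$ i.i.d.\ balls into $n$ bins yields bin-occupancy counts that are NA, and the indicators of the monotone event $\{\text{bin } j \text{ is nonempty}\}$ depend only on disjoint coordinates. Consequently the joint probability that every clean interval is hit is at most the product of marginals,
\begin{equation*}
  \Pr\!\bigl[\text{every clean } I_j \text{ is hit}\bigr] \;\leq\; \bigl(1-(1-k/n)^i\bigr)^{B} \;\leq\; \exp\!\bigl(-B(1-k/n)^i\bigr).
\end{equation*}
Plugging in $B \geq n/k - m$ and the lower bound from the previous paragraph produces a tail of the form $\exp(-\tfrac12(n^\varepsilon/k + m n^{\varepsilon-1}))$ after bookkeeping of constants, so the complementary event --- some clean interval is fully unmarked --- occurs with the probability claimed.

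The main obstacle will be this joint-probability step, since the events are not independent. One route is to cite a standard negative-association result for balls-in-bins (e.g.\ Dubhashi--Ranjan); another is a direct conditioning, showing that conditional on any pattern of marks that fills some subset of clean intervals, each remaining mark still avoids a fixed $I_j$ with conditional probability at least $1-k/n$, so one may iterate the single-interval estimate. A secondary subtlety is verifying that the $(1+o(1))$ slack in the per-interval exponent really is $o(1)$; this is precisely what $i \in \omega(\ln^2 n)$ buys, because it forces $k/n\to 0$ fast enough that $k/(n-k) = (k/n)(1+o(1))$ and hence that the per-interval loss can be cleanly traded for the factor $\tfrac12$.
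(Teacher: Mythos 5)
Your proposal follows essentially the same route as the paper's own proof: partition into $n/k$ disjoint length-$k$ intervals, discount the at most $m$ pre-marked ones, bound the per-interval survival probability $(1-k/n)^i$ from below by $\tfrac12 n^{\varepsilon-1}$ using $i\in\omega(\ln^2 n)$ to absorb the second-order loss, and use negative correlation of the hitting events (the paper carries this out via exactly the direct conditioning argument you name as your second route) to bound the joint probability by the product of marginals. One small remark: the computation with $B=n/k-m$ actually yields a tail of $\exp\bigl(-\tfrac12(n^{\varepsilon}/k - mn^{-1+\varepsilon})\bigr)$, with a minus sign on the $m$-term rather than the plus sign in the lemma statement --- the paper's own proof ends at the same expression, so this is a typo in the statement rather than a flaw in your argument.
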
\begin{proof}
We partition the sequence into disjoint intervals of length $k$. We have $\frac{n}{k}$ such intervals. We call an interval \emph{marked} if it contains at least one marked element. At most $m$ of these intervals contain a previously (deterministically) marked element. For any other interval $I$, we have
\begin{equation}
\label{freeprob}
\Pro{I \text{ is marked}} = 1-\bigl(1-\tfrac{k}{n}\bigr)^{i}.
\end{equation}
Note that these intervals are not marked independently. However, the fact that some of these intervals $I_1, \dots, I_j$ are marked by the random process implies that there are at most $i-j$ random selections left that could lead to the marking of another interval $J$ since all intervals are disjoint. Thus, the events that intervals are marked are \emph{negatively correlated}: if some intervals are marked, the probability that another one is also marked cannot increase, i.e., 
\begin{equation}
\label{negative_correlation}
\begin{split}
\Pro{\text{$I$ is marked} \mid I_1, \dots, I_j \text{ are marked}} \\
			\leq \Pro{\text{$I$ is marked}}. 
\end{split}
\end{equation} 
Let $I_1, \dots, I_{n/k}$ denote the intervals. By a slight abuse of notation, we also denote by $I_j$ the event that interval $I_j$ is marked.

We will need the following fact to complete the proof: for $x \leq \tfrac{1}{2}$, we have 
\begin{equation}
\label{exp_ineq}
 1-x \geq e^{-x-x^2}.
\end{equation}

We compute,
\begin{alignat}{2}
&\Pro{\text{all intervals are marked}} \\
 & = \Pro{\bigwedge_{1\leq j \leq n/k} I_j}\notag\\
				      \begin{split}&= \Pro{I_1} \cdot \Pro{I_2 \mid I_1} \cdot \Pro{I_3 \mid I_1 \wedge I_2 } \notag\\
							& \quad \quad \quad  \quad \cdots \Pro{I_{n/k} \mid I_1 \wedge \cdots \wedge I_{n/k-1}}\end{split}\notag\\
				      &\leq \prod_{1 \leq j\leq n/k} \Pro{I_j} & \text{by \eqref{negative_correlation}} \notag\\
					&\leq \Bigl(1-(1-\tfrac{k}{n})^{i}\Bigr)^{\frac{n}{k}-m}\notag\\
					&\leq \Bigl(1-\exp\bigl(i(-\tfrac{k}{n}-\tfrac{k^2}{n^2})\bigr)\Bigr)^{\frac{n}{k}-m} & \text{by \eqref{exp_ineq}}\notag\\
					&\leq \Bigl(1-n^{-1+\epsilon} n^{-\tfrac{(1-\epsilon)^2\ln n}{i}}\Bigr)^{\frac{n}{k}-m}  \label{def_k}\\
					&\leq \Bigl(1-\tfrac{1}{2}n^{-1+\epsilon}\Bigr)^{\frac{n}{k}-m} \label{def_i}\\
					&\leq \exp\Bigl(-\tfrac{1}{2}n^{-1+\epsilon}(\tfrac{n}{k}-m)\Bigr) \label{exp_ineq2}.						
\end{alignat}

Here, \eqref{def_k} follows from the definition of $k = \frac{n}{i}(1-\varepsilon) \ln n$, \eqref{def_i} follows, for large $n$, from the assumption that $i \in \omega\bigl(\ln^2 n\bigr)$, and \eqref{exp_ineq2} follows from $1+x \leq e^x$. 
\end{proof}
With those facts at hand we now prove Theorem \ref{lowerbound_gatemodel}.

\begin{proof}[Proof of Theorem \ref{lowerbound_gatemodel}.]
Assume that all vertices have (almost) the same list $[1,2,\dots,n]$, except that each vertex is excluded from its own list. As a result the nodes do not have exactly the same gates. However, the $i$-th gate of any list will be either node $(i-1)\ell+1$ or node $(i-1)\ell+2$. We will therefore treat both as essentially the same node, i.e, whenever the $i$-th gate of any node is informed, the $i$-th gate of every other node is also informed immediately.  
Clearly, this assumption only speeds up the process.
We now bound from below the time needed to inform all vertices by a process consisting of two phases that finishes at least as early as the actual rumor spreading model.

In the first phase, which lasts for $(1-\varepsilon) \log_2 (n/\ell)$ steps, we only assume that the number of informed vertices doubles in each step. Note that this is optimal since in each step the number of informed vertices can at most double. So we end up with at most $(\frac{n}{\ell})^{1-\varepsilon}$ informed gates. We shall not use any further information on how these gates became informed. 

In the second phase, we assume that every vertex is allowed to spread the rumor even if it has not received it yet. In other words, we bring forward the random choice of each vertex that has not yet started to spread the rumor. This modification will only speed up the process. In particular, at the beginning of the second phase, every such vertex chooses one of the gates uniformly at random and then spreads the rumor accordingly. We will prove that even under this assumption, we additionally need $(1-\varepsilon)\ln (n/\ell) + \ell-1$ steps until every vertex has received the rumor. 

Using Lemma \ref{freeinterval_lemma}, we argue that after the random choice of all these vertices, there is a large interval of uninformed gates. Let $n_0$ denote the number of such vertices.  Note that now the length of the sequence is $n/\ell$. So by Lemma \ref{freeinterval_lemma} with $i = n_0$ and $m = (\frac{n}{\ell})^{1-\varepsilon}$, there is such a free interval of length $k = \frac{n}{n_0 \ell}(1-\varepsilon)\ln (n/\ell) \geq \tfrac{1-\varepsilon}{\ell}\ln (n/\ell)$
with probability at least 
\begin{align*}
&1-\exp\bigl(-\tfrac{1}{2}(n/\ell)^{\varepsilon}/k +m(n/\ell)^{-1+\varepsilon}\bigr) \\
&= 1-\exp\bigl(-\tfrac{1}{2}(n/\ell)^{\varepsilon}/k +1\bigr)\\
&\geq 1-\exp\bigl(-\tfrac{n^{\varepsilon}}{2\ln n} +1\bigr).
\end{align*}
We need at least $\ell-1$ steps to reach this interval and additionally $\ell \cdot k \geq (1-\varepsilon)\ln (n/\ell)$ steps to inform all vertices in this interval. So in total, we need
\begin{equation}
 (1-\varepsilon) (\log_2 n + \ln n - \log_2 \ell - \ln \ell)+\ell-1
\end{equation}
steps in order to inform every vertex with probability at least $1-O\bigl(\exp\bigl(-\frac{n^{\varepsilon}}{2\ln n}\bigr)\bigr)$.
\end{proof}
%
%
%
%
%
%
%
\bibliographystyle{plain}
\bibliography{soda09broadcast.bib}

\begin{thebibliography}{1}

\bibitem{ADHP09}
S.~Angelopoulos, B.~Doerr, A.~Huber, and K.~Panagiotou.
\newblock Tight bounds for quasirandom rumor spreading.
\newblock {\em Electron.~J.~Combin.}, 16(1):R102, 2009.

\bibitem{B91}
E.~Bach.
\newblock Realistic analysis of some randomized algorithms.
\newblock {\em J.~Comput.~System~Sci.}, 42(1):30--53, 1991.

\bibitem{CG89}
B.~Chor and O.~Goldreich.
\newblock On the power of two-point based sampling.
\newblock {\em J.~Complexity}, 5(1):96--106, 1989.

\bibitem{BC}
B.~Doerr, T.~Friedrich, and T.~Sauerwald.
\newblock Quasirandom rumor spreading.
\newblock In {\em Proceedings of the Nineteenth Annual ACM-SIAM Symposium on
  Discrete Algorithms. (SODA)}, pages 773--781, 2008.

\bibitem{FG85}
A.M. Frieze and G.R. Grimmett.
\newblock The shortest-path problem for graphs with random arc-lengths.
\newblock {\em Discrete Appl.~Math.}, 10(1):57--77, 1985.

\bibitem{I07}
R.~Impagliazzo.
\newblock Hardness as randomness: a survey of universal derandomization.
\newblock 2007.
\newblock informal publication.

\bibitem{KR88}
H.~J. Karloff and P.~Raghavan.
\newblock Randomized algorithms and pseudorandom numbers.
\newblock {\em Journal of the ACM}, 40(3):454--476, 1993.

\bibitem{PU90}
D.~Peleg and E.~Upfal.
\newblock A time-randomness trade-off for oblivious routing.
\newblock {\em SIAM Journal on Computing}, 19(2):256--266, 1990.

\bibitem{Pi87}
B.~Pittel.
\newblock On spreading a rumor.
\newblock {\em SIAM J.~Appl.~Math.}, 47(1):213--223, 1987.

\end{thebibliography}
\end{document}